\newtheorem{theorem}{Theorem}
\newtheorem{definition}{Definition}
\newtheorem{proof}{Proof}
\newsavebox{\savepar}
\newenvironment{boxit}{\begin{center} \begin{lrbox}{\savepar}
			\begin{minipage}[b]{140mm}}
			{\end{minipage}\end{lrbox}
		\fbox{\usebox{\savepar}
} \end{center}}
\def\Z{{\sf Z}}
\title{Zero knowledge Proofs for Cloud Storage Integrity Checking}
\author{Faen Zhang, Xinyu Fan, Pengcheng Zhou, Wenfeng Zhou\\
	\{faenzhang, pengchengzhou\}ainnovation@gmail.com\\
	fanxinyu@ainnovation.com, zhouwenfeng@interns.ainnovation.com\\
	AInnovation Technology Ltd.} 
\begin{document}
\maketitle

\begin{abstract}
	With the wide application of cloud storage, cloud security has become a crucial concern. Related works have addressed security issues such as data confidentiality and integrity, which ensure that the remotely stored data are well maintained by the cloud. However, how to define zero-knowledge proof algorithms for stored data integrity check has not been formally defined and investigated. We believe that it is important that the cloud server is unable to reveal any useful information about the stored data. In this paper, we introduce a novel definition of data privacy for integrity checks, which describes very high security of a zero-knowledge proof. We found that all other existing remote integrity proofs do not capture this feature. We provide a comprehensive study of data privacy and an integrity check algorithm that captures data integrity, confidentiality, privacy, and soundness.
	\keywords{Data Integrity \and Data Privacy \and Cloud Storage \and Cyber Security \and Cryptography.}
\end{abstract}
\section{Introduction}
Cloud computing offers different types of computational services to end users via computer networks, demonstrating a huge number of advantages. It has been becoming a trend that
individuals and IT enterprises store data remotely on the cloud in a flexible on-demand manner, which has become a popular way
of data outsourcing. This can greatly reduces the burden of storage management and maintenance and brings a great advantage of universal data access and convenience to users. In fact, cloud storage has become one of the main parts in cloud computing where user data are stored and maintained by cloud servers. It allows users to access their data via computer networks at anytime and from anywhere.

Despite the great benefits provided by cloud computing, data
security is a very important but challenging problem that must be solved. One of the major concerns of data security is data integrity in a remote storage system \cite{CSA,Arrington06}. Although storing data on the cloud is attractive, it does not always offer any guarantee on data integrity and retrievability. Simple data integrity check in a remote data storage can be done by periodically examining the data files stored on the cloud server, but such an approach can be very expensive if the amount of data is huge. An interesting problem is to check data integrity remotely without the need of accessing the full copy of data stored on the cloud server.
For example, the data owner possesses some verification token (e.g. a digest of the data file \cite{DeswarteQS04,FilhoB06}), which is very small compared with the stored dataset. However, a number of security issues have been found in previous research \cite{WangWLRL09,WangWRL10,WangWRLL11}. Several techniques, such as Proof of Retrievability (POR) \cite{ShachamW08,JuelsK07} and Third Party Auditing (TPA) \cite{WangRLL10,WangWRLL11,WangChowWangRenLou}, have been proposed to solve the above data integrity checking
problem with public auditability. POR is loosely speaking a kind of Proof of Knowledge (POK) \cite{BellareG92} where the knowledge is the data file, while TPA allows any third party (or auditor) to perform the data integrity checking on behalf of the data owner just based on some public information (e.g. the data owner's public key). Several schemes with public auditability have been proposed in the context of ensuring remotely stored data integrity under different system and security models \cite{JuelsK07,ShachamW08,AtenieseBCHKKPS11,WangWRLL11}.

Intuitively, it is important that an auditing process should not introduce new vulnerabilities of unauthorized information leakage towards the data security \cite{ShahBMS07}. The previous efforts in Remote Integrity Checking (DIC) accommodate several security features including data integrity and confidentiality, which mainly ensure secure maintenance of data. However, they do not cover the issue of data {\em privacy}, which means that the communication flows
(DIC proofs) from the cloud server should not reveal any useful
information to the adversary. Intuitively, by ``privacy'', we mean that an adversary should not be able to distinguish which file has been uploaded by the client to the cloud server. We refer it as {\em Zero Knowledge}. We believe that it is very
important to consider such privacy issues adequately in protocol designs. Taking some existing TPA based DIC proofs
\cite{WangWRL10,WangChowWangRenLou,WangWRLL11} as an example, the proof sent by the cloud server to the auditor does not allow the auditor to recover the file, but the auditor can still distinguish which file (among a set of possible files) is involved in the DIC proof, which is clearly undesirable.

In this paper, we propose an Zero Knowledge-based definition
of data privacy (DIC-Privacy) for TPA based DIC
protocols. We show that two recently published DIC schemes
\cite{WangChowWangRenLou,WangWRLL11} are insecure under our new
definition, which means some information about the user file is
leaked in the DIC proof. We then provide an new construction to
demonstrate how DIC-privacy can be achieved. We show that by
applying the Witness Zero Knowledge proof technique
\cite{GrothS08}, we are able to achieve DIC-privacy in DIC
protocols. To the best of our knowledge, our construction is the
first scheme that can achieve DIC-privacy.

\noindent\textbf{Paper Organization.} The rest of the paper is
organized as follows. In Section 2, we describe the security model
and definition of data privacy for DIC proofs. In Section 3, we
analyze the DIC protocols by Wang {\em et al.} and show why their DIC protocols fail to capture data privacy. In Section 4, we demonstrate how data privacy can be achieved with a witness
Zero Knowledge proof. We also provide the definition of
soundness for DIC proofs and show the soundness of our protocol
based on witness Zero Knowledge proof. We conclude the paper
in Section 6.

\section{Definitions and Security Model}

\textbf{DIC Protocols.} We will focus on TPA based Data Integrity Checking (DIC) protocols for cloud data storage systems. The protocol involves three entities: the cloud storage server, the cloud user, and the third party auditor (TPA). The cloud user relies on the cloud storage server to store and maintains his/her data. Since the user no longer keeps the data locally, it is of critical importance for the user to ensure that the data are correctly stored and maintained by the cloud server. In order to avoid periodically data integrity verification, the user will resort to a TPA for checking the integrity of his/her outsourced data. To be precise, an DIC protocol for cloud storage consists of five algorithms:

\begin{itemize}
	\item {\sf KeyGen}:  Taking as input a security parameter $\lambda$, the algorithm {\sf KeyGen} generates the public and private key pair $(pk,sk)$ of a cloud user (or data owner).
	
	\item {\sf TokenGen}: Taking as input a file $F$ and the user
	private key $sk$, this algorithm generates a file tag $t$ (which includes a file name $name$) and an authenticator $\sigma$ for $F$. The file and file tag, as well as the authenticator are then stored in the cloud server.
	
	%
	
	\item {\sf Challenge}: Given the user public key $pk$ and a file tag
	$t$, this algorithm is run by the auditor to generate a random
	challenge $chal$ for the cloud server.
	
	
	\item {\sf Respond}: Taking as input $(F, t, \sigma, chal)$,
	this algorithm outputs a proof ${\cal P}$, which is used to prove the integrity of the file.

	\item {\sf Verify}: Taking as input $(pk, t, chal, {\cal P})$, the
	algorithm outputs either True or False.
	
\end{itemize}

\noindent\textbf{DIC Privacy.} We define the data privacy for DIC
proofs via an {\em Zero Knowledge} game between a simulator
${\cal S}$ (i.e. the cloud server or prover) and an adversary
${\cal A}$ (i.e. the auditor or verifier).

\medskip

\noindent{\bf Setup}: The simulator runs {\sf KeyGen} to generate
$(sk, pk)$ and passes $pk$ to the adversary $\cal A$.

\noindent {\bf Phase 1}: ${\cal A}$ is allowed to make Token
Generation queries. To make such a query, $\cal A$ selects a file
$F$ and sends it to ${\cal S}$. ${\cal S}$ generates a file tag
$t$, an authenticator $\sigma$, and then returns $(t,\sigma)$ to
${\cal A}$.

\noindent {\bf Phase 2}: ${\cal A}$ chooses two different files
$F_0, F_1$ that have not appeared in Phase 1, and send them to
${\cal S}$. $\cal S$ calculates $(t_0,\sigma_0)$ and
$(t_1,\sigma_1)$ by running the {\sf TokenGen} algorithm. $\cal S$
then tosses a coin $b \in \{0,1\}$, and sends $t_b$ back to $\cal
A$. ${\cal A}$ generates a challenge $chal$ and sends it to ${\cal
	S}$. ${\cal S}$ generates a proof ${\cal P}$ based on
$(F_b,t_b,\sigma_b)$ and ${\cal A}$'s challenge $chal$ and then
sends ${\cal P}$ to ${\cal A}$. Finally, ${\cal A}$ outputs a bit
$b'$ as the guess of $b$. The process is illustrated in Figure
\ref{fig:ind}.

Define the advantage of the adversary ${\cal A}$ as
\[Adv_{\cal A}(\lambda) = |\Pr[b'=b]-1/2|.\]

\begin{definition} An DIC proof has \textit{Zero Knowledge} if for
	any polynomial-time algorithm, $Adv_{\cal A}(\lambda)$ is
	a negligible function of the security parameter $\lambda$.
\end{definition}

\begin{figure*}[ht]
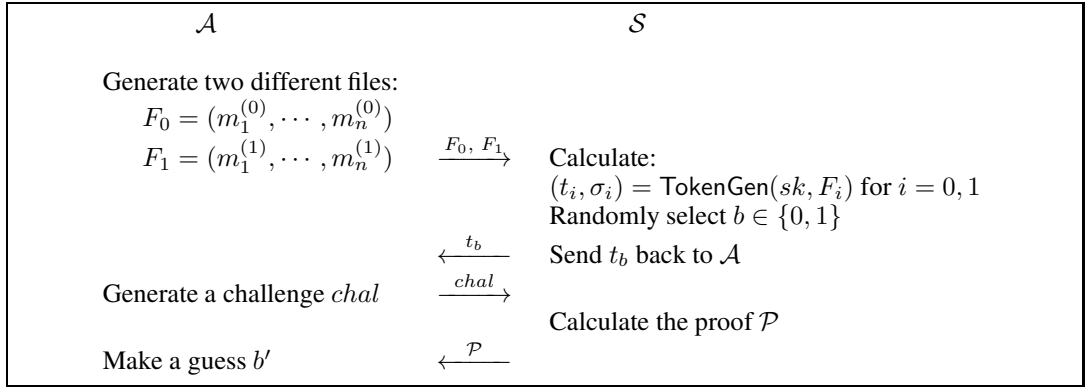

	\begin{boxit}
		\begin{center}
			\begin{tabular}{l@{\hspace{0.5cm}}c@{\hspace{0.5cm}}l}
				{ ~~~~~~~~~~~~~$ \cal A $}            &       &   { ~~~~~~~~~~~$\cal S$ }\\
				&   &\\
				Generate two different files: &&\\
				~~~~~~$F_0 = (m_1^{(0)},\cdots,m_n^{(0)})$          &       &  \\
				~~~~~~$F_1 = (m_1^{(1)},\cdots,m_n^{(1)})$     &  $\stackrel{F_0,~
					F_1}{\overrightarrow{\hspace{1cm}}}$ & Calculate: \\
				&& $(t_i,\sigma_i) =
				{\sf TokenGen}(sk,F_i)$ for $i = 0,1$\\
				& & Randomly select $b \in \{0,1\}$   \\
				& $\stackrel{t_b}{\overleftarrow{\hspace{1cm}}}$  & Send $t_b$ back to $\cal A$ \\
				Generate a challenge $chal$  &  $\stackrel{chal}{\overrightarrow{\hspace{1cm}}}$          & \\
				&& Calculate the proof $\cal P$\\
				Make a guess $b'$   & $\stackrel{\cal
					P}{\overleftarrow{\hspace{1cm}}}$ &
			\end{tabular}
		\end{center}
	\end{boxit}
	\caption{Zero knowledge Proof game run between $\cal A$ and $\cal S$}
	\label{fig:ind}
\end{figure*}

\section{Privacy Analysis of Existing DIC Protocols}

\subsection {Notations and Preliminaries}

Before describing some existing DIC protocols, we first introduce
some notations and tools used in those protocols. We denote $F$
the data file to be stored in the cloud. It is decomposed as a
sequence of $n$ blocks $m_1, ..., m_n \in \Z_p $ for some large
prime $p$. We denote by $H(\cdot)$ and $h(\cdot)$ cryptographic
hash functions.

Let $G_1, G_2$ and $G_T$ be multiplicative cyclic groups of prime
order $p$. Let $g_1$ and $g$ be generators of $G_1$ and $G_2$,
respectively. A bilinear map is a map $e: G_1 \times G_2 \rightarrow
G_T$ such that for all $u\in G_1$, $v \in G_2$ and $a,b \in  \Z_p$,
$e(u^a,v^b) = e(u,v)^{ab}$. Also, the map $e$ must be efficiently
computable and non-degenerate (i.e. $e(g_1, g) \ne 1$). In addition,
let $\psi$ denote an efficiently computable isomorphism from $G_2$
to $G_1$, with $\psi(g) = g_1$ \cite{BonehLS04}.

\subsection{A DIC Protocol by Wang {\em et al.} \cite{WangWRLL11}}

In \cite{WangWRLL11}, Wang {\em et al.} presented a DIC protocol based on
Merkle Hash Tree (MHT) \cite{Merkle80}. Their protocol works as
follows.
\medskip

\noindent{\bf Setup Phase}: The cloud user generates the keys and
authentication tokens for the files as follows.
\medskip

\noindent{\sf KeyGen}: The cloud user runs {\sf KeyGen} to generate the
public and private key pair. Specifically, the user generates a
random verification and signing key pair $(spk,ssk)$ of a digital
signature scheme, and set the public key $pk=(v,spk)$ and
$sk=(x,ssk)$ where $x$ is randomly chosen from $\Z_p$ and $v= g^x$.
\medskip

\noindent{\sf TokenGen}:  Given a file $F = (m_1,m_2,\cdots,m_n)$,
the client chooses a file name $name$, a random element $u \in G_1$
and calculates the file tag
\[ t=name \| n \| u \| SSig_{ssk}(name  \| n \| u),
\]
and authenticators ${\sigma_i}  = {(H(m_i)  \cdot {u^{m_i}})}^{x}$
where $H$ is a cryptographic hash function modeled as a random
oracle. The client then generates a root $R$ based on the
construction of Merkle Hash Tree (MHT) where the leave nodes of the
tree are an ordered set of hash values $H(m_i) (i = 1,2,\cdots,n)$.
The client then signs the root $R$ under the private key $x$:
$sig_{sk}(H(R))= (H(R))^{x}$ and  sends $\{F, t,
\{\sigma_i\}, sig_{sk}(H(R))\}$ to the cloud server.
\medskip

\noindent{\bf Audit Phase}: The TPA first retrieves the file tag $t$
and verifies the signature $SSig_{ssk}(name\|n\|u)$ by using $spk$.
The TPA then obtains $name$ and $u$.
\medskip

\noindent{\sf Challenge}: To generate $chal$, TPA picks a random
subset $I=\{s_1, s_2, s_3,...,s_c\}$ of set $[1,n]$, where $s_1 \leq
\cdots \leq s_c$. Then,  the TPA sends a challenge $chal = {\{i,
	\nu_i\}}_{i \in I}$ to the cloud server where $\nu_i$ is randomly
selected from $\Z_p$.
\medskip

\noindent{\sf Response}: Upon receiving the challenge $chal = {\{i,
	\nu_i\}}_{i \in I}$, the cloud server computes $\mu = \sum_{i \in
	I}\nu_im_i$ and $\sigma = \prod_{i \in I}\sigma_i^{\nu_i}$. The
cloud server will also provide the verifier with a small amount of
auxiliary information $\{\Omega_i\}_{i \in I}$, which are the node
siblings on the path from the leaves ${H(m_i)}_{i \in I}$ to the
root $R$ of the MHT. The server sends the proof ${\cal P}=\{\mu,
\sigma, \{H(m_i),\Omega_i\}_{i \in I}, sig_{sk}(H(R))\}$ to the TPA.
\medskip

\noindent{\sf Verify}: Upon receiving the responses form the cloud
server, the TPA generates the root $R$ using $\{H(m_i),\Omega_i\}_{i
	\in I}$, and authenticates it by checking
\[
e(sig_{sk}(H(R)),g)=e(H(R),v).
\]
If the authentication fails, the
verifier rejects by emitting FALSE. Otherwise, the verifier checks
\[e(\sigma,g)=e((\prod\limits_{i=s_1}^{s_c}H(m_i)^{\nu_i})u^\mu,v).\]
If the equation holds, output True; otherwise, output False.

\subsubsection{Zero knowledge Proof Analysis}

It is easy to see that the above DIC protocol does not provide
DIC-Privacy. Let $\cal A$ denote an Zero Knowledge Proof adversary which works as
follows (also see Fig. \ref{fig:wang11}).

\begin{itemize}
	\item  $\cal A$ chooses distinct files $F_0 = (m_1^{(0)},\cdots,m_n^{(0)})$
	and $F_1 = (m_1^{(1)},\cdots,m_n^{(1)})$ where
	$m_i^{(0)} \ne m_i^{(1)}$.
	\item $\cal S$ chooses at random a file $F_b$ for $b \in \{0,1\}$ and then computes
	$t_b, \{\sigma^{(b)}_i\},$ $sig_{sk}(H(R^{(b)}))$.
	\item $\cal A$ chooses a random challenge $chal = \{i,\nu_i\}_{i \in I}$.
	\item $\cal S$ computes and sends to $\cal A$ the response
	\[
	{\cal P} = (\mu^{(b)},
	\sigma^{(b)}, \{H(m^{(b)}_i),\Omega^{(b)}_i\}_{i \in I},
	sig_{sk}(H(R^{(b)}))).
	\]
	\item $\cal A$ chooses $i \in I$ and calculates $H(m^{(0)}_i)$ and
	compare it with the received $H(m^{(b)}_i)$. If they are equal,
	output 0; otherwise, output 1.
\end{itemize}

\noindent\textbf{Probability Analysis.} It is easy to see that $\cal A$
has an overwhelming probability to guess the value of $b$ correctly
since the probability that
\[
m^{(0)}_i \ne m^{(1)}_i \wedge H(m^{(0)}_i) = H(m^{(1)}_i)
\]
is negligible since the hash function
is assumed to be a random oracle in \cite{WangWRLL11}.

\begin{figure*}[thb]
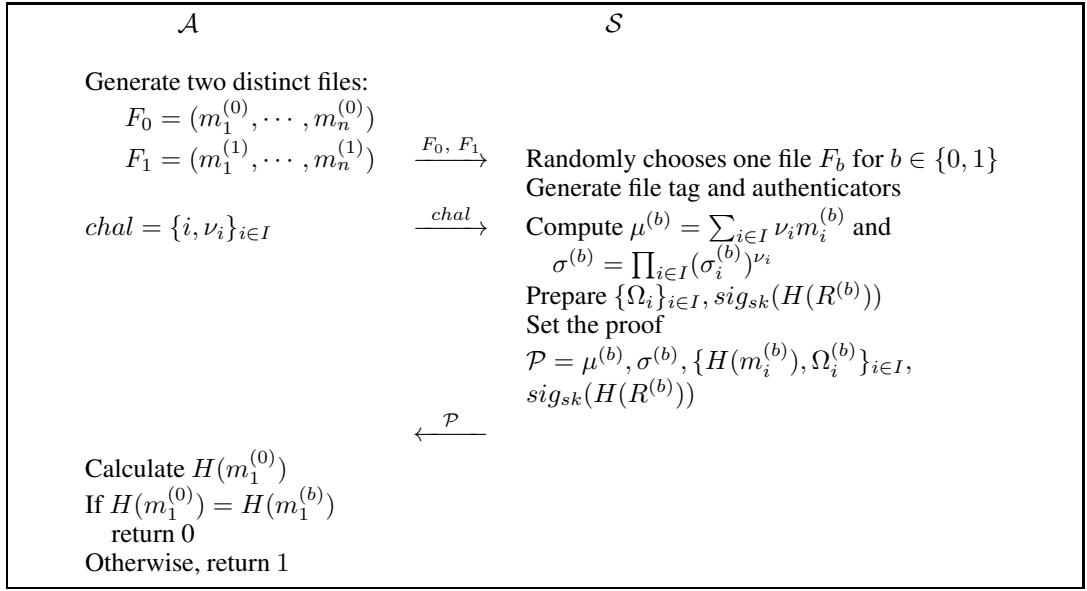

	\begin{boxit}
		\begin{center}
			\begin{tabular}{l@{\hspace{0.5cm}}c@{\hspace{0.5cm}}l}
				{ ~~~~~~~~~~~~~$\cal A$}            &       &   { ~~~~~~~~~~~$\cal S$ }\\
				&   &\\
				Generate two distinct files: &&\\
				~~~~~~$F_0 = (m_1^{(0)},\cdots,m_n^{(0)})$          &       &  \\
				~~~~~~$F_1 = (m_1^{(1)},\cdots,m_n^{(1)})$     &  $\stackrel{F_0,~ F_1}{\overrightarrow{\hspace{1cm}}}$ &  Randomly chooses one file $F_b$ for $b \in \{0,1\}$     \\
				& &  Generate file tag and authenticators \\
				$chal = \{i, \nu_i\}_{i\in I}$  &  $\stackrel{chal}{\overrightarrow{\hspace{1cm}}}$          &   Compute $\mu^{(b)} = \sum_{i\in I} \nu_i m^{(b)}_{i}$ and \\
				&& ~~~~$\sigma^{(b)} = \prod_{i\in I}(\sigma^{(b)}_{i})^{\nu_i}$  \\
				&& Prepare $\{\Omega_i\}_{i \in I}, sig_{sk}(H(R^{(b)}))$\\
				&& Set the proof\\
				&& ${\cal P} = \mu^{(b)},
				\sigma^{(b)}, \{H(m^{(b)}_i),\Omega^{(b)}_i\}_{i \in I},$\\
				&& $sig_{sk}(H(R^{(b)}))$\\
				&  $\stackrel{\cal P}{\overleftarrow{\hspace{1cm}}}$ &\\
				Calculate $H(m^{(0)}_1)$  \\
				If $H(m^{(0)}_1) = H(m^{(b)}_1)$\\
				~~~~return 0  \\
				Otherwise, return $1$
			\end{tabular}
		\end{center}
	\end{boxit}
	\caption{Zero Knowledge analysis on Wang {\em et al.}'s DIC
		Protocol \cite{WangWRLL11}.} \label{fig:wang11}
\end{figure*}

\subsection{Another Privacy Preserving DIC Protocol by Wang {\em et al.} \cite{WangChowWangRenLou}}\label{sec:wang}



In \cite{WangChowWangRenLou}, Wang {\em et al.} introduced a new DIC
protocol. Compared with the DIC protocol presented above, this new
protocol aims to achieve the additional property of privacy
preserving (i.e. the TPA cannot learn the content of the file in the
auditing process).

%

\begin{figure*}[th]
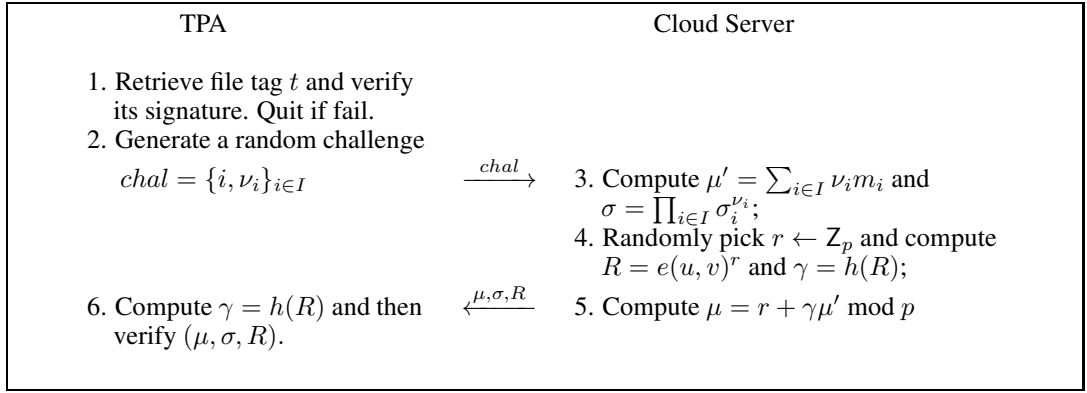

	\begin{boxit}
		\begin{center}
			\begin{tabular}{l@{\hspace{0.5cm}}c@{\hspace{0.5cm}}l}
				{ ~~~~~~~~~~~~~TPA}            &       &   { ~~~~~~~~~~~Cloud Server }\\
				&   &\\
				1. Retrieve file tag $t$ and verify            &       &  \\
				~~~~its signature. Quit if fail.   &   &  \\
				2. Generate a random challenge && \\
				~~~~ $chal = \{i, \nu_i\}_{i\in I}$  &  $\stackrel{chal}{\overrightarrow{\hspace{1cm}}}$          & 3. Compute $\mu' = \sum_{i\in I} \nu_i m_i$ and \\
				&& ~~~~$\sigma = \prod_{i\in I}\sigma_i^{\nu_i}$;  \\
				&& 4. Randomly pick $r \leftarrow \Z_p$ and compute\\
				&& ~~~~$R=e(u,v)^r$ and $\gamma = h(R)$; \\
				6. Compute $\gamma=h(R)$ and then  &  $\stackrel{\mu, \sigma, R}{\overleftarrow{\hspace{1cm}}}$ & 5. Compute $\mu = r + \gamma\mu' \mbox{ mod } p$\\
				~~~~verify $(\mu, \sigma, R).$ &&\\
				& &\\
			\end{tabular}
		\end{center}
	\end{boxit}
	\caption{The third party auditing protocol by Wang {\em et al.}
		\cite{WangChowWangRenLou}.} \label{fig:wang}
\end{figure*}

Let $(p, G_1, G_2, G_T, e, g_1, g, H, h)$ be the system parameters
as introduced above. Wang \emph{et al.}'s privacy-preserving
public auditing scheme works as follows (also see Fig. \ref{fig:wang}):
\medskip

\noindent{\bf Setup Phase}:
\medskip

\noindent{\sf KeyGen}: The cloud user runs KeyGen to generate the
public and private key pair. Specifically, the user generates a
random verification and signing key pair $(spk,ssk)$ of a digital
signature scheme, a random $x \leftarrow \Z_p$, a random element
$u \leftarrow G_1$, and computes $v \leftarrow g^x$. The user
secret key is $sk = (x,ssk)$ and the user public key is $pk =
(spk, v, u)$.
\medskip

\noindent{\sf TokenGen}: Given a data file $F = (m_1,...,m_n)$, the
user first chooses uniformly at random from $\Z_p$ a unique
identifier $name$ for $F$. The user then computes authenticator
$\sigma_i$ for each data block $m_i$ as $\sigma_i \leftarrow (H(W_i)
\cdot u^{m_i})^x \in G_1$ where $W_i = name\|i$. Denote the set of
authenticators by $\phi = \{\sigma_i\}_{1 \leq i \leq n}$. Then the
user computes $t = name\|SSig_{ssk}(name)$ as the file tag for $F$,
where $SSig_{ssk}(name)$ is the user's signature on $name$ under the
signing key $ssk$. It was assumed that the TPA knows the number of
blocks $n$. The user then sends $F$ along with the verification
metadata $(\phi,t)$ to the cloud server and deletes them from local
storage.
\medskip

\noindent{\bf Audit Phase}: 
The TPA first retrieves the file tag $t$ and verifies the
signature $SSig_{ssk}(name)$ by using $spk$. The TPA quits by
emitting $\bot$ if the verification fails. Otherwise, the TPA
recovers $name$.
\medskip

\noindent{\sf Challenge}: The TPA generates a challenge $chal$ for
the cloud server as follows: first picks a random $c$-element
subset $I = \{s_1,...,s_c\}$ of set $[1,n]$, and then for each
element $i \in I$, chooses a random value $\nu_i \in \Z_p$. The
TPA sends $chal = \{(i,\nu_i)\}_{i\in I}$ to the cloud server.
\medskip

\noindent{\sf Response}: Upon receiving the challenge $chal$, the
server generates a response to prove the data storage correctness.
Specifically, the server chooses a random element $r \leftarrow
\Z_p$, and calculates $R = e(u, v)^r \in G_T$. Let $\mu'$ denote the
linear combination of sampled blocks specified in $chal$: $\mu' =
\sum_{i \in I} \nu_i m_i$. To blind $\mu'$ with $r$, the server
computes $\mu = r + \gamma \mu' \mbox{ mod } p$, where $\gamma =
h(R) \in \Z_p$. Meanwhile, the server also calculates an aggregated
authenticator $\sigma = \prod_{i \in I}\sigma_i^{\nu_i}$. It then
sends $(\mu, \sigma, R)$ as the response to the TPA.
\medskip

\noindent{\sf Verify}: Upon receiving the response $(\mu, \sigma,
R)$ from the cloud server, the TPA validates the response by first
computing $\gamma = h(R)$ and then checking the following
verification equation
\begin{eqnarray}
R\cdot e(\sigma^\gamma,g) \stackrel{?}{=}
e((\prod\limits_{i=s_1}^{s_c}H(W_i)^{\nu_i})^\gamma\cdot u^\mu,v).
\end{eqnarray}
The verification is successful if the equation holds.

\begin{figure*}[ht]
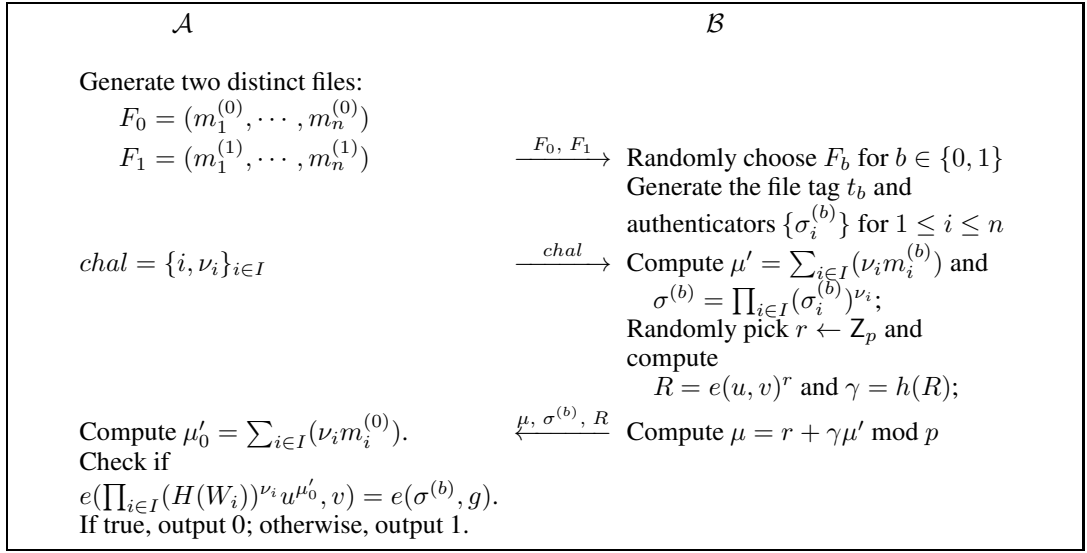

	\begin{boxit}
		\begin{center}
			\begin{tabular}{l@{\hspace{0.2cm}}c@{\hspace{0.2cm}}l}
				{ ~~~~~~~~~~~~~$\cal A$}            &       &   { ~~~~~~~~~~~$\cal B$ }\\
				&   &\\
				Generate two distinct files: &&\\
				~~~~~~$F_0 = (m_1^{(0)},\cdots,m_n^{(0)})$          &       &  \\
				~~~~~~$F_1 = (m_1^{(1)},\cdots,m_n^{(1)})$     &  $\stackrel{F_0,~ F_1}{\overrightarrow{\hspace{1.3cm}}}$ &  Randomly choose $F_b$ for $b \in \{0,1\}$     \\
				& &  Generate the file tag $t_b$ and\\
				&& authenticators $\{\sigma^{(b)}_i\}$ for $1 \le i \le n$\\
				$chal = \{i, \nu_i\}_{i\in I}$  &  $\stackrel{chal}{\overrightarrow{\hspace{1.3cm}}}$          &   Compute $\mu' = \sum_{i\in I} (\nu_i m^{(b)}_i)$ and \\
				&& ~~~~$\sigma^{(b)} = \prod_{i\in I}(\sigma^{(b)}_i)^{\nu_i}$;  \\
				&&  Randomly pick $r \leftarrow \Z_p$ and \\
				&& compute\\
				&& ~~~~$R=e(u,v)^r$ and $\gamma = h(R)$; \\
				Compute $\mu_0' = \sum_{i\in I} (\nu_i m^{(0)}_i)$. &
				$\stackrel{\mu,~ \sigma^{(b)},~ R}{\overleftarrow{\hspace{1.3cm}}}$
				&
				Compute $\mu = r + \gamma\mu' \mbox{ mod } p$\\
				Check if \\
				$e(\prod_{i\in I} (H(W_i))^{\nu_i}u^{\mu'_{0}}, v) = e(\sigma^{(b)}, g)$.\\
				If true, output 0; otherwise, output 1. \\
				
			\end{tabular}
		\end{center}
	\end{boxit}
	\caption{Zero Knowledge analysis on Wang {\em et al.} DIC
		Protocol \cite{WangChowWangRenLou}.} \label{fig:wang13}
\end{figure*}

\subsubsection{Zero Knowledge Analysis}

In \cite{WangChowWangRenLou}, it has been shown that the DIC proof
is privacy preserving. That is, the TPA cannot recover the file $F$
from the proof. This is done by concealing the value of $\mu'$.
However, we found that such a treatment could not guarantee that
there is no information leakage during the auditing process. Below
we show that Wang {\em et al.}'s scheme cannot achieve
Zero Knowledge. Let $\cal A$ denote an Zero Knowledge Proof adversary which works
as follows (also see Fig. \ref{fig:wang13}).

\begin{itemize}
	\item  $\cal A$ chooses two distinct files $F_0 =
	(m_1^{(0)},\cdots,m_n^{(0)})$ and $F_1 =
	(m_1^{(1)},\cdots,m_n^{(1)})$ such that $m_i^{(0)} \ne m_i^{(1)}$
	for $1 \le i \le n$. \item $\cal S$ randomly chooses a file $F_b$
	for $b \in \{0,1\}$ and computes the file tag $t_b$ and
	authenticators $\{\sigma^{(b)}_i\}$. \item After receiving the tag
	$t_b$, $\cal A$ chooses a random challenge $chal = \{i,\nu_i\}_{i \in
		I}$. \item $\cal S$ computes and sends to $\cal A$ the response ${\cal
		P} = (\mu,\sigma^{(b)},R)$. \item $\cal A$ computes $\mu_0' =
	\sum_{i\in I} (\nu_i m^{(0)}_i)$ and checks if
	\[
	e(\prod_{i\in I} (H(W_i))^{\nu_i}u^{\mu'_{0}}, v) = e(\sigma^{(b)}, g).
	\]
	If it is true, return 0; otherwise, return 1.
\end{itemize}

\noindent\textbf{Probability Analysis.} If $b = 0$, then
$\sigma^{(b)} = \sigma^{(0)}$ and the equation
\[
e(\prod_{i\in I}
(H(W_i))^{\nu_i}u^{\mu'_{0}}, v) = e(\sigma^{(0)}, g)
\]
always holds.
On the other hand, if $b = 1$, then $\sigma^{(b)} = \sigma^{(1)}$
and
\[
e(\prod_{i\in I} (H(W_i))^{\nu_i}u^{\mu'_{0}}, v) =
e(\sigma^{(1)}, g)
\]
holds only when
\[
\mu'_0 (= \sum_{i\in I}(\nu_i
m^{(0)}_i)) = \mu'_1 (= \sum_{i\in I} (\nu_i m^{(1)}_i)),
\]
which
happens only with probability $1/p$ for randomly selected
$\{\nu_i\}_{i \in I}$. Therefore, $\cal A$ has an overwhelming
probability to guess the value of $b$ correctly.

\section{A New DIC Protocol with DIC-Privacy}

In order to achieve the DIC-privacy, we adopt the Witness
Zero Knowledge Proof of Knowledge technique proposed by Groth and
Sahai \cite{GrothS08}. Their method can be applied to pairing
groups. Our goal is to protect both the file and the corresponding
authenticator so that the adversary cannot learn any information
about the file.

Similar to Wang {\em et al.}'s scheme \cite{WangChowWangRenLou}
reviewed in Section \ref{sec:wang}, our scheme is still based on the
``aggregate authenticator'' introduced by Shacham and Waters
\cite{ShachamW08}. That is, the cloud server will prove that the
equation
\begin{equation}
e(\sigma, g)=e((\prod\limits_{i=S_1}^{S_c} H(W_i)^ {\nu_i})
u^{\mu'},v) \label{eq1}
\end{equation}
holds, where $\mu' = \sum_{i \in I} \nu_im_i$ and $\sigma = \prod_{i
	\in I}\sigma_i^{\nu_i}$. We will treat $(u^{\mu'},\sigma)$ as the
witness when applying the Groth-Sahai proof system, and rewrite
Equation~\ref{eq1} as follows
\begin{equation}
e(\sigma, g)e(u^{\mu'},v^{-1}) = e((\prod\limits_{i=S_1}^{S_c}
H(W_i)^ {\nu_i}),v). \label{eq2}
\end{equation}

In order to protect the privacy of $\mu'$ (or $u^{\mu'}$) and
$\sigma$, the user computes an additional commitment key $\vec u =
(u_1, u_2)$ of the form
\[u_1=(u,u^\alpha),~~
u_2=(u^\tau, u^{\tau\alpha}),
\]
where $\alpha, \tau$ are selected from $\Z_p$ at random and $u$ is
the same generator of $G_1$ used in Wang {\em et al.}'s scheme.
This additional commitment key $\vec u$ is now part of the user
public key. To hide $u^{\mu'}$ and $\sigma$, the Cloud Server
computes the commitments $\vec c = (c_1,c_2)$ as
\[c_1=(c_{11},c_{12})=(u^{r_{11}+{r_{12}}\tau}, u^{\alpha(r_{11}+{r_{12}}\tau)}\sigma),
\]
\[c_2=(c_{21},c_{22})=(u^{r_{21}+{r_{22}}\tau}, u^{\alpha(r_{21}+{r_{22}}\tau)}u^{\mu'}).
\]
where $r_{i,j}$ $(i,j \in \{1,2\})$ are randomly selected from
$\Z_p$. The Cloud Server also computes
\[
\vec \pi = (\pi_1,\pi_2) = ((1, g^{r_{11}}v^{-r_{21}}),(1,
g^{r_{12}}v^{-r_{22}})).
\]
and sends ($\vec c,\vec\pi$) as the response to the TPA.

TPA then verifies the response sent by the Cloud Server by checking
the equality of
\begin{equation}
\vec c \bullet \left(\begin{array}{cc}
1 & g\\
1 & v^{-1}
\end{array}
\right)=\iota_T(t_T) (\vec u \bullet \vec \pi) \label{verify}
\end{equation}
where $t_T$ represents the right hand side of Equation (\ref{eq2})
and $\iota_T$ denotes the following transformation:
\[
t_T \rightarrow \left(\begin{array}{cc}
1 & 1\\
1 & t_T
\end{array}
\right).
\]
The ``$\bullet$" operation is defined as follows: define a function
\[F((x_1,x_2),(y_1,y_2)) = \left(\begin{array}{cc}
e(x_1,y_1) &e(x_1,y_2)\\
e(x_2,y_1) & e(x_2,y_2)\end{array} \right) \] for $(x_1,x_2) \in
G_1^2$ and $(y_1,y_2) \in G_2^2$, and the ``$\bullet$" operation is
defined as
\[
\vec x \bullet \vec y = F(x_1,y_1)F(x_2,y_2).
\]

\smallskip\noindent\textbf{Correctness.}
To verify Equation (\ref{verify}),
\[
\mbox{Left} = \vec c \bullet \left(\begin{array}{cc}
1 & g\\
1 & v^{-1}
\end{array}
\right) = \left(
\begin{array}{cc}
e(c_{11},{ 1}) & e(c_{11},g) \\
e(c_{12},{ 1}) & e(c_{12},g) \\
\end{array}
\right)  \left(
\begin{array}{cc}
e(c_{21},{ 1}) & e(c_{21},v^{-1}) \\
e(c_{22},{ 1}) & e(c_{22},v^{-1}) \\
\end{array}
\right)
\]
\[ \mbox{Right} = \iota_T(t_T) F(u_1,\pi_1) F(u_2,\pi_2) ~~~~~~~~~~~~~~~~~~~~~~~~~~~~~~~~~~~~~~~~~~~~~~~
\]
\[=
\left(
\begin{array}{cc}
1 & 1 \\
1 & t_T \\
\end{array}
\right)
\left(
\begin{array}{cc}
1 & e(u,g^{r_{11}}v^{-r_{21}}) \\
1 & e(u^\alpha,g^{r_{11}}v^{-r_{21}}) \\
\end{array}
\right)
\left(
\begin{array}{cc}
1 & e(u^\tau,g^{r_{12}}v^{-r_{22}}) \\
1 & e(u^{\tau\alpha},g^{r_{12}}v^{-r_{22}}) \\
\end{array}
\right)~~~
\]
and we have
\[e(c_{11},{1})e(c_{21},{1})=1=1\cdot 1 \cdot 1\]
\[e(c_{12},{1})e(c_{22},{1})=1=1\cdot 1 \cdot 1\]
\begin{eqnarray*}
	e(c_{11},g)e(c_{21},v^{-1}) & = & e(u^{r_{11}+{r_{12}}\tau}, g)
	\cdot
	e(u^{r_{21}+{r_{22}}\tau},v^{-1}) \\
	& = &
	e(u,g^{r_{11}})e(u^{r_{21}},v^{-1})e(u^{\tau},g^{r_{12}})e(u^\tau,v^{-r_{22}})\\
	& = & e(u^{r_{11}+\tau r_{12}},g) e(u^{r_{21}+\tau r_{22}},v^{-1})
\end{eqnarray*}
\begin{eqnarray*}
	e(c_{12},g) e(c_{22},v^{-1}) & = & e(u^{\alpha(r_{11}+{r_{12}}\tau)}\sigma,g)e(u^{\alpha(r_{21}+{r_{22}}\tau)}u^{\mu'},v^{-1})\\
	& = & e(u^{\alpha(r_{11}+{r_{12}}\tau)},g)e(u^{\alpha(r_{21}+{r_{22}}\tau)},v^{-1}) e(\sigma,g)e(u^{\mu'},v^{-1})\\
	& = & t_T e(u^{\alpha r_{11}},g) e(u^{\alpha{r_{12}}\tau},g)
	e(u^{\alpha r_{21}}, v^{-1}) e(u^{\alpha{r_{22}}\tau},v^{-1})\\
	& = & t_T e(u^\alpha, g^{r_{11}}v^{-r_{21}})
	e(u^{\alpha\tau},g^{r_{12}}v^{-r_{22}})
\end{eqnarray*}

\subsection{DIC-Privacy of Our New Scheme}

Below we show that our new DIC protocol has the DIC-Privacy under
the symmetDIC external Diffie-Hellman (SXDH) assumption
\cite{GrothS08}. Let $gk = (\lambda, p, G_1, G_2,$ $ G_T, e, g_1, g_2)$
define a bilinear map $e: G_1\times G_2 \rightarrow G_T$ where $g_b$
is a generator of $G_b$ for $b = \{0,1\}$. The SXDH assumption holds
if for any polynomial time algorithm $\cal A$ and any $b \in \{1,2\}$ we
have
\[
|\Pr[x, y \leftarrow \Z_p^*: {\cal A}(gk, g_b^x, g_b^y, g_b^{xy}) = 1] -
\Pr[x, y, r \leftarrow \Z_p^*: {\cal A}(gk, g_b^x, g_b^y, g_b^r) = 1]| \le
\epsilon
\]
where $\epsilon$ is negligible in the security parameter $\lambda$.

\begin{theorem}
	Our new DIC protocol has DIC-Privacy if the SXDH problem is hard.
\end{theorem}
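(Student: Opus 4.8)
The plan is to show that the adversary's advantage in the Zero Knowledge game is negligible by arguing that the transcript $(t_b, \text{chal}, {\cal P}) = (t_b, \text{chal}, (\vec c, \vec \pi))$ carries no usable information about the bit $b$. There are two ingredients to isolate. First, the file tag $t_b = name_b \| SSig_{ssk}(name_b)$ depends only on a freshly chosen random identifier $name_b$, not on the file contents at all; so $t_0$ and $t_1$ are identically distributed (both are a uniformly random $\Z_p$ element together with a signature on it), and $t_b$ leaks nothing about $b$. Second, and this is the heart of the argument, I claim the Groth–Sahai commitments $\vec c$ together with the proof $\vec\pi$ are computationally indistinguishable from commitments to (and a proof about) an \emph{independent} witness, in particular from commitments that do not depend on $F_b$.

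The key steps, in order. (1) Reduce DIC-Privacy to the indistinguishability of the commitment string: build a hybrid in which ${\cal S}$ replaces the real witness $(u^{\mu'_b}, \sigma^{(b)})$ by the fixed pair $(1,1)$ (or any fixed witness) before committing; show any distinguisher between the real game and this hybrid breaks the hiding property of Groth–Sahai commitments under SXDH. (2) Invoke the SXDH assumption on $G_1$: the commitment key $\vec u = (u_1,u_2)$ with $u_1 = (u, u^\alpha)$, $u_2 = (u^\tau, u^{\tau\alpha})$ is a \emph{perfectly binding} (DDH-tuple) key, whereas if $\vec u$ were instead a \emph{hiding} key (a non-DDH tuple, say $u_2 = (u^\tau, u^{\tau\alpha}\cdot u^{\delta})$ for random $\delta$) the commitments $c_1, c_2$ would be uniformly random in $G_1^2$ independent of the committed values, and the verification proof $\vec\pi$ can be simulated without the witness. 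By SXDH these two key distributions are indistinguishable, so we may pass to the hiding-key world at negligible cost. (3) In the hiding-key world, argue that $(\vec c, \vec\pi)$ is distributed independently of $(F_b, \sigma^{(b)})$ — this is exactly the witness-indistinguishability / zero-knowledge property of the Groth–Sahai system for the pairing-product equation~(\ref{eq2}), which holds because $F_0, F_1$ both yield genuine witnesses (the authenticators $\sigma^{(b)}_i = (H(W_i)u^{m_i^{(b)}})^x$ are honestly generated, so equation~(\ref{eq1}) holds for either file). Hence in this world $\Pr[b'=b] = 1/2$ exactly. (4) Chain the hybrids: $|Adv_{\cal A}(\lambda)| \le \epsilon_{\text{SXDH}}(\lambda) + (\text{advantage of breaking the signature/negligible terms}) = \mathrm{negl}(\lambda)$.

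The main obstacle I anticipate is step (3): making precise that moving to a hiding commitment key really does erase all dependence on the witness \emph{including} what is revealed through $\vec\pi$. One must check that $\vec\pi = ((1, g^{r_{11}}v^{-r_{21}}), (1, g^{r_{12}}v^{-r_{22}}))$, together with the verification equation~(\ref{verify}), constrains only the randomness $r_{ij}$ and the public target $t_T$, and that given a hiding key one can choose the $r_{ij}$ (and adjust $\vec\pi$) so that any fixed witness produces the identical distribution of $(\vec c,\vec\pi)$ — i.e. invoking the Groth–Sahai simulator/trapdoor rather than re-deriving it. A secondary subtlety is that $\vec u$ is reused as the public key across all files, so the reduction must embed its SXDH challenge into $\vec u$ once at setup and answer all of Phase 1's TokenGen queries (which only need $u$, not $\alpha,\tau$) consistently; since TokenGen never uses $\alpha$ or $\tau$, this is straightforward, but it should be stated explicitly. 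Modulo these points, the argument is a direct application of the SXDH-based composable security of Groth–Sahai proofs, and the remaining computations (the correctness identities already displayed above) are routine.
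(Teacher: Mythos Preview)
Your proposal is correct and follows essentially the same approach as the paper: embed the SXDH challenge into the commitment key $\vec u$, observe that when the challenge is a DDH tuple the simulation is perfect, and when it is not the Groth--Sahai commitments become perfectly hiding so the adversary's advantage collapses to zero. The paper packages this as a single direct reduction (output $1$ iff $b'=b$, then compare the two SXDH cases) and in the hiding-key case invokes witness indistinguishability directly---since both $F_0$ and $F_1$ yield genuine witnesses for equation~(\ref{eq2}), there is no need for your intermediate hybrid that commits to the fixed non-witness $(1,1)$ or for the GS simulation trapdoor; this is a mild simplification of your step~(1)/(3) but not a substantive difference.
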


\begin{proof} Let $\cal A$ denote an adversary who has a non-negligible
	advantage $\epsilon$ in winning the Zero Knowledge Proof game, we construct another
	algorithm $\cal B$ which can solve the SXDH problem also with a
	non-negligible probability.
	
	$\cal B$ receives a challenge $gk, A = u^x, B = u^y, C = u^{z}$ where
	$gk = (p, G_1, G_2, G_T,$ $e,u, g)$ and $z$ is either $xy$ or a random
	element $\xi$ in $\Z_p$. $\cal B$ sets up the Zero Knowledge Proof game for $\cal A$ as
	follows
	\begin{enumerate}
		\item $\cal B$ uses the information in $gk$ to generate all the
		systems parameters and public/private keys as described in Wang et
		al.'s TPA scheme (Sec.~\ref{sec:wang}). \item $\cal B$ also sets the
		values of the commitment key $\vec u = (u_1,u_2)$ in our scheme as
		$u_1=(u,A)$ and $u_2=(B, C)$.
	\end{enumerate}
	
	Upon receiving the two files $F_0$ and $F_1$ from $\cal A$, $\cal B$
	simulates the game as follows. $\cal B$ generates a random file
	identifier $name$ and the file tag $t=name\|SSig_{ssk}(name)$, and
	uses $name$ and the secret key $x$ to compute the authenticators
	$\{\sigma^{(0)}_{i}\}$ (for $F_0$)  and $\{\sigma^{(1)}_{i}\}$ (for
	$F_1$) honestly. After that, $\cal B$ sends the file tag $t$ back to
	$\cal A$. Upon receiving the challenge $chal$ from $\cal A$, $\cal B$ computes
	$\mu_0'$, $\mu_1'$, and the corresponding aggregated authenticators
	$\sigma^{(0)}$ and $\sigma^{(1)}$ honestly. $\cal B$ then tosses a
	random coin $b \gets \{0,1\}$, and generates the response to $\cal A$ as
	follows.
	\begin{enumerate}
		\item Randomly choose $r_{11}, r_{12}, r_{21}, r_{22}$ from
		$\Z_p$.
		
		\item Compute $c_{11} = u^{r_{11}}B^{r_{12}}, c_{12} =
		A^{r_{11}}C^{r_{12}}\sigma^{(b)}, c_{21} = u^{r_{21}}B^{r_{22}},$ $
		c_{22} = A^{r_{21}}C^{r_{22}}u^{\mu_b'}$.
		
		\item Compute $\vec \pi = (\pi_1,\pi_2) = ((1,
		g^{r_{11}}v^{-r_{21}}),(1, g^{r_{12}}v^{-r_{22}}))$.
	\end{enumerate}
	
	$\cal B$ then sends the response $(\vec c, \vec \pi)$ to $\cal A$. If $\cal A$
	outputs $b'$ such that $b' = b$, then $\cal B$ outputs 1; otherwise
	$\cal B$ outputs $0$.
	
	\smallskip\noindent\textbf{Case 1: $z = xy$}. In this case, the distribution of the response $(\vec c, \vec \pi)$ is identically
	to that of a real response, and hence we have
	\[\Pr[b' = b] = 1/2 + \epsilon.\]
	\smallskip\noindent\textbf{Case 2: $z = \xi$}. In this case, the
	commitment scheme is perfectly hiding. That is, for a valid proof
	$(\vec c, \vec \pi)$ satisfying equation~\ref{verify}, it can be
	expressed as a proof for $(u^{\mu_0'}, \sigma_0)$ (with randomness
	$(r^0_{11}, r^0_{12}, r^0_{21}, r^0_{22})$), or a proof for
	$(u^{\mu_1'}, \sigma_1)$ (with randomness $(r^1_{11}, r^1_{12},
	r^1_{21}, r^1_{22})$). Therefore, we have
	\[\Pr[b' = b] = 1/2.\]
	Combining both cases, we have
	\begin{eqnarray*}
		& & \Pr[{\cal B}(gk, u^x, u^y, u^{xy}) = 1)] - \Pr[{\cal B}(gk, u^x, u^y,
		u^{\xi}) = 1)]\\
		& = & \Pr[b' = b| z = xy] - \Pr[b' = b| z = \xi]\\
		& = & \epsilon.
	\end{eqnarray*}
\end{proof}

\subsection{Soundness of the Protocol}

Having shown the Zero Knowledge Proof feature of the protocol, we have seen that
adversary $\cal A$ cannot distinguish the file that has been used by
the cloud server in an DIC proof. The remanning task is to prove
the ``soundness'' of the protocol. We say a protocol is sound if
it is infeasible for the cloud server to change a file without
being caught by the TPA in an auditing process. We formally define
the soundness games between a simulator $\cal B$ and an adversary $\cal A$
(i.e. the cloud server) as follows.

\begin{itemize}
	\item Key Generation. $\cal B$ generates a user key pair $(sk, pk)$ by running
	{\sf KeyGen}, and then provides $pk$ to $\cal A$.
	
	\item Phase 1. $\cal A$ can now interact with $\cal B$ and make at most
	$\ell$ Token Generation queries. In each query, $\cal A$ sends a file
	$F_i = \{m_{i1},m_{i2},\cdots,m_{in}\} (1 \le i \le \ell)$ to
	$\cal B$, which responds with the corresponding file tag $t_i$ and
	authentication tokens $\phi_i = \{\sigma_{ij}\}$ $(1 \le j \le
	n)$.

	\item Phase 2. $\cal A$ outputs a file $F^*$ and a file tag $t^*$ such
	that $t^* = t_i$ but $F^* \ne F_i$ for an $i \in [1,\ell]$ (i.e.
	at least one message block of $F_i$ has been modified by $\cal A$).
	$\cal B$ then plays the role as the verifier and executes the DIC
	protocol with $\cal A$ by sending a challenge $chal^* = \{j,\nu_j\}$
	which contains at least one index $j$ such that $F^*$ differs from
	$F_i$ in the $j$-th message block.
	
	\item Decision. Based on the proof ${\cal P}^*$ computed by $\cal A$,
	$\cal B$ makes a decision which is either True or False.
\end{itemize}

\begin{definition}
	We say a witness Zero Knowledge Proof DIC protocol is $\epsilon$-sound
	if
	\[\Pr[\cal B \mbox{ outputs True}] \leq \epsilon.\]
\end{definition}

Below we prove that our DIC protocol is sound under the co-CDH
assumption. Let $(p, G_1, G_2,$ $G_T, e, g_1, g)$ be the systems
parameters defined as above where $e: G_1 \times G_2 \to G_T$ is a
bilinear map. Let $\psi: G_2 \to G_1$ denote an efficiently
computable isomorphism such that $\psi(g) = g_1$.

\smallskip\noindent\textbf{Computational co-Diffie-Hellman (co-CDH) Problem}
on $(G_1,G_2)$: Given $g_1, u \in G_1$ and $g, g^a \in G_2$ as
input where $g_1$ and $g$ are generators of $G_1$ and $G_2$
respectively, $a$ is randomly chosen from $\Z_p$, and $u$ is
randomly chosen from $G_1$, compute $u^a \in G_1$.

\begin{theorem}
	The proposed witness Zero Knowledge Proof DIC protocol is
	$negl(\lambda)$-sound, where $negl(\lambda)$ is a negligible
	function of the security parameter $\lambda$, if the co-CDH
	problem is hard.
\end{theorem}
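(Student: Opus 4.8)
The plan is to reduce breaking soundness to solving co-CDH. Given a co-CDH instance $(g_1, u \in G_1, g, g^a \in G_2)$, the reduction $\mathcal{B}$ sets the user public key to $v = g^a$ (so the secret exponent $x$ is implicitly the unknown $a$), uses the given $u$ as the generator used in {\sf TokenGen}, and picks the signing key pair $(spk,ssk)$ honestly. The central device is to program the random oracle $H$: for every block-label $W_i = name\|i$ that appears in a Token Generation query, $\mathcal{B}$ chooses a fresh $\beta_i \leftarrow \Z_p$ and a fresh exponent and answers $H(W_i) = u^{\beta_i} / u^{m_i} \cdot g_1^{?}$ --- more precisely, to be able to produce the authenticator $\sigma_i = (H(W_i)\cdot u^{m_i})^a$ without knowing $a$, $\mathcal{B}$ sets $H(W_i) = g_1^{r_i} u^{-m_i}$ for random $r_i$, so that $H(W_i)\cdot u^{m_i} = g_1^{r_i} = \psi(g)^{r_i}$ and hence $\sigma_i = \psi(g^a)^{r_i} = \psi(v)^{r_i}$, which $\mathcal{B}$ can compute using the isomorphism $\psi$ and the challenge element $g^a$. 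This lets $\mathcal{B}$ answer all Phase 1 queries perfectly. The commitment key $\vec u = (u_1,u_2)$ is generated honestly (the binding setting), so the Groth--Sahai commitments are extractable.

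Next I would use the extractability of the Groth--Sahai proof system. In Phase 2 the adversary outputs $F^* \neq F_i$ with $t^* = t_i$, and then, on the challenge $chal^* = \{(j,\nu_j)\}_{j\in I}$ (which by definition includes an index where $F^*$ and $F_i$ disagree), produces a convincing response $(\vec c, \vec\pi)$. Because the commitment key is binding, $\mathcal{B}$ extracts the witness $(u^{\mu^*}, \sigma^*)$ from $\vec c$, and soundness of the Groth--Sahai proof guarantees this extracted witness actually satisfies the pairing-product equation~(\ref{eq2}), i.e. $e(\sigma^*, g)\, e(u^{\mu^*}, v^{-1}) = e(\prod_{j\in I} H(W_j)^{\nu_j}, v)$. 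On the other hand, the honestly computed authenticator aggregate $\sigma = \prod_{j\in I}\sigma_j^{\nu_j}$ for the \emph{original} file $F_i$ satisfies the same equation with $\mu = \sum_{j\in I}\nu_j m_j$ in place of $\mu^*$. Dividing the two relations yields $e(\sigma^*/\sigma,\, g) = e(u^{\,\mu-\mu^*},\, v)$, so $\sigma^*/\sigma = (u^{\mu-\mu^*})^a = \psi(v)^{\,\mu-\mu^*}$ is forced. Since the challenge includes an index $j$ with $m_j^* \neq m_j$ and the $\nu_j$ are chosen by the verifier after $F^*$ is fixed, $\Delta\mu := \mu - \mu^* \neq 0 \bmod p$ except with probability $1/p$. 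Then $\mathcal{B}$ recovers $u^a = \bigl(\sigma/\sigma^* \cdot \text{(known factors)}\bigr)^{1/\Delta\mu}$; more carefully, using the programmed values $H(W_j) = g_1^{r_j} u^{-m_j}$ one rewrites everything in terms of $u^a$ and $g_1^a = \psi(v)$, both of which combine so that $u^{a\Delta\mu}$ is isolated, and raising to $\Delta\mu^{-1} \bmod p$ gives $u^a$, the co-CDH answer.

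I would conclude by collecting the negligible loss terms: the probability that two distinct block-labels collide under the random oracle (negligible, as in the analyses of Section 3), the probability $1/p$ that $\Delta\mu = 0$, the negligible probability that the Groth--Sahai extractor fails, and the signature-forgery probability needed to rule out the adversary changing $name$ or $n$ inside the tag (handled by the unforgeability of $SSig$). If the adversary's success probability is $\delta$, then $\mathcal{B}$ solves co-CDH with probability at least $\delta - negl(\lambda)$, so $\delta$ must itself be negligible.

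The main obstacle I anticipate is making the Groth--Sahai extraction argument rigorous in \emph{this} setting: the standard Groth--Sahai soundness theorem says the extracted witness satisfies the equation, but here I also need that the \emph{honest} witness for $F_i$ and the \emph{extracted} witness for $F^*$ both satisfy the \emph{same} instance of equation~(\ref{eq2}) --- this is fine because the right-hand side $t_T = e(\prod_{j\in I} H(W_j)^{\nu_j}, v)$ depends only on the labels $W_j$ and the challenge, not on the file content, so the two equations genuinely share a right-hand side. A secondary subtlety is the simultaneous use of the binding commitment key in this reduction versus the hiding key used in the DIC-Privacy proof; since soundness only needs the binding setting, there is no conflict, but one should state explicitly that $\mathcal{B}$ runs the scheme with an honestly generated (binding) $\vec u$ rather than embedding an SXDH instance there.
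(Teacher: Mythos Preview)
Your proposal is correct and follows essentially the same route as the paper: embed the co-CDH challenge into $v=g^a$, program the random oracle as $H(W_j)=g_1^{r_j}u^{-m_j}$ so that authenticators $\sigma_j=\psi(v)^{r_j}$ are computable without the secret, extract $(\sigma^*,u^{\mu^*})$ from the binding Groth--Sahai commitments, divide the forged and honest pairing equations, and invert the nonzero exponent $\mu^*-\mu$ to obtain $u^a$. Your treatment is in fact slightly more careful than the paper's in explicitly flagging the binding-versus-hiding setup of $\vec u$ and in enumerating the negligible loss terms (signature forgery, $\Delta\mu=0$, extractor failure); the only cosmetic slip is the stray ``known factors'' in the final extraction, which are not actually needed since $(\sigma^*/\sigma)^{1/(\mu^*-\mu)}$ already equals $u^a$ directly.
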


\begin{proof}
	Our proof is by contradiction. We show that if there exists an
	adversary $\cal A$ that can win the soundness game with a
	non-negligible probability, then we can construct another
	adversary $\cal B$ which can solve the co-CDH problem also with a
	non-negligible probability.
	
	
	According to the soundness game, $F^* =
	\{m_1^*,m_2^*,\cdots,m_n^*\}$ must be different from the original
	file $F_i = \{m_1,m_2,\cdots,m_n\}$ associated with $t^*$ (or
	$t_i$). That means there must exist an $i \in [1,n]$ such that
	$m_i^* \ne m_i$. Below we show that if $\cal A$ can pass the
	verification for $\mu^*$ where $\mu^* = \sum_{i \in I} \nu_im^*_i$
	and at lease one of $\{m^*_i\}_{i \in I}$ is modified by $\cal A$,
	then $\cal B$ can solve the co-CDH problem.
	
	$\cal B$ is given an instance of the co-CDH problem $(g_1, u, g, g^x)$
	where $g_1$ and $g$ are generators of $G_1$ and $G_2$ respectively
	such that $\psi(g) = g_1$, and $u$ is a random element in $G_1$.
	$\cal B$'s goal is to compute $u^x \in G_1$. $\cal B$ honestly generates
	the signing key pair $(spk,ssk)$, $\alpha, \tau \in \Z_p$ and the
	commitments key $u_1=(u,u^\alpha), u_2=(u^\tau, u^{\tau\alpha})$
	according to the protocol specification. $\cal B$ also sets $g^x$ as
	value of $v$ in the user public key, but the value of $x$ is
	unknown to $\cal B$. $\cal B$ then simulates the game as follows.
	\medskip
	
	\noindent {\bf Phase 1:} $\cal B$ answers $\cal A$'s queries in Phase 1 as
	follows. To generate a file tag $t_i$ for a file $F_i$, $\cal B$ first
	chooses $name$ at random and generates the file tag $t_i =
	name\|SSig_{ssk}(name)$. For each block $m_j (1 \leq j \leq n)$ in
	$F_i$, $\cal B$ chooses at random $r_j \in_R \Z_p$ and programs the
	random oracle
	\[
	H(W_j) = g_1^{r_j}/u^{m_j}.
	\]
	$\cal B$ then computes
	\[
	\sigma_j = (H(W_j)u^{m_{j}})^x =  (g_1^{r_j})^x = (\psi(v))^{r_i}.
	\]
	It is easy to verify that $\sigma_j$ is a valid authenticator with
	regards to $m_j$.
	\medskip
	
	\noindent {\bf Phase 2:} Suppose $\cal A$ outputs a response ${\cal P}^*
	= (\vec c, \vec \pi)$ for $t^*, \{m^*_i\}_{i \in I}$ and challenges
	$\{\nu_i\}_{i \in I}$ where at least one $m_i^*$ has been modified
	by the adversary. Denote $\mu^* = \sum_{i \in I} \nu_im^*_i$.
	
	Let $\mu = \sum_{i \in I} \nu_im_i$ and $\sigma = \Pi_{i \in
		I}\sigma_i^{\nu_i}$ denote the original file and authenticator
	that satisfy
	\begin{eqnarray}
	e(\sigma, g)=e((\prod_{i \in I} H(W_i)^ {\nu_i}) u^{\mu},v).
	\end{eqnarray}
	$\cal B$ then uses the value of $\tau$, which is used to generate the
	commitment key $\vec u$, to obtain $\sigma^* =
	c_{12}/c_{11}^\alpha$ and $u^{\mu^*} = c_{22}/c_{21}^\alpha$ from
	the commitment $\vec c = (c_1,c_2)$. Since ${\cal P}^*$ can pass
	the verification, from Equation \ref{verify} we have
	\begin{eqnarray}
	e(\sigma^*, g)=e((\prod_{i \in I} H(W_i)^ {\nu_i}) u^{\mu^*},v).
	\end{eqnarray}
	From Equation 5 and Equation 6, we can obtain
	\[
	e(\sigma^*/\sigma, g)
	= e(u^{\mu^* - \mu} ,v).
	\]
	Since $\cal B$ chooses the challenges $\nu_i$ randomly, with
	overwhelming probability $1 - 1/p$, $\mu^* = \sum_{i \in I}
	\nu_im^*_i \ne \sum_{i \in I} \nu_im_i = \mu$, and hence $\cal B$ can
	obtain
	\[u^x =(\sigma^*/\sigma)^{\frac{1}{\mu^* - \mu}}.\]
	
\end{proof}

\section{Conclusion}\label{sec:concl}
In this paper, we studied a new desirable security notion called
DIC-Privacy for remote data integrity checking protocols for cloud storage. We showed that several well-known DIC protocols cannot provide this property, which could render the privacy of user data exposed in an auditing process. We then proposed a new DIC protocol which can provide DIC-Privacy. Our construction is based on an efficient Witness Zero Knowledge Proof of Knowledge system. In addition, we also proved the soundness of the newly proposed protocol, which means the cloud server cannot modify the user data without being caught by the third party auditor in an auditing process.

\bibliographystyle{my}
\bibliography{references}

\begin{thebibliography}{10}

\bibitem{Arrington06}
M.~Arrington, `Gmail disaster: Reports of mass email deletions', {\em
  http://www.techcrunch.com/2006/12/28/gmail-disasterreports-of-mass-email-deletions/},
  (2006).

\bibitem{AtenieseBCHKKPS11}
Giuseppe Ateniese, Randal~C. Burns, Reza Curtmola, Joseph Herring, Osama Khan,
  Lea Kissner, Zachary N.~J. Peterson, and Dawn Song, `Remote data checking
  using provable data possession', {\em ACM Trans. Inf. Syst. Secur.}, {\bf
  14},  1--34, (2011).

\bibitem{BellareG92}
Mihir Bellare and Oded Goldreich, `On defining proofs of knowledge', in {\em
  Advances in Cryptology, Proc.\ CRYPTO 92, \emph{LNCS 740}}, pp. 390--420,
  (1992).

\bibitem{BonehLS04}
Dan Boneh, Ben Lynn, and Hovav Shacham, `Short signatures from the weil
  pairing', {\em J. Cryptology}, {\bf 17}(4),  297--319, (2004).

\bibitem{DeswarteQS04}
Yves Deswarte, Jean~Jacques Quisquater, and Ayda Saidane, `Remote integrity
  checking', in {\em Integrity and Internal Control in Information Systems VI},
  eds., Sushil Jajodia and Leon Strous, volume 140 of {\em IFIP International
  Federation for Information Processing},  1--11, Springer Boston, (2004).

\bibitem{FilhoB06}
D{\'e}cio Luiz~Gazzoni Filho and Paulo S{\'e}rgio Licciardi~Messeder Barreto,
  `Demonstrating data possession and uncheatable data transfer', {\em IACR
  Cryptology ePrint Archive},  150--159, (2006).

\bibitem{GrothS08}
Jens Groth and Amit Sahai, `Efficient non-interactive proof systems for
  bilinear groups', in {\em Advances in Cryptology, Proc.\ EUROCRYPT 2008,
  \emph{LNCS 4965}}, pp. 415--432, (2008).

\bibitem{JuelsK07}
Ari Juels and Burton S.~Kaliski Jr., `Pors: proofs of retrievability for large
  files', in {\em ACM Conference on Computer and Communications Security}, pp.
  584--597, (2007).

\bibitem{CSA}
\mbox{Cloud Security Alliance}, `Top threats to cloud computing', (2010).
\newblock http://www.cloudsecurityalliance.org.

\bibitem{Merkle80}
Ralph~C. Merkle, `Protocols for public key cryptosystems', in {\em IEEE
  Symposium on Security and Privacy}, pp. 122--134, (1980).

\bibitem{ShachamW08}
Hovav Shacham and Brent Waters, `Compact proofs of retrievability', in {\em
  Advances in Cryptology - ASIACRYPT}, pp. 90--107, (2008).

\bibitem{ShahBMS07}
M.~A. Shah, M.~Baker, J.~C. Mogul, and R.~Swaminathan, `Auditing to keep online
  storage services honest', in {\em Proc. of HotOSÕ07}, pp. 1--6, (2007).

\bibitem{WangChowWangRenLou}
Cong Wang, Sherman~S.M. Chow, Qian Wang, Kui Ren, and Wenjing Lou,
  `Privacy-preserving public auditing for secure cloud storage', {\em IEEE
  Transactions on Computers}.
\newblock Accepted for publication, doi: 10.1109/TC.2011.245.

\bibitem{WangRLL10}
Cong Wang, Kui Ren, Wenjing Lou, and Jin Li, `Toward publicly auditable secure
  cloud data storage services', {\em IEEE Network}, {\bf 24}(4),  19--24,
  (2010).

\bibitem{WangWRL10}
Cong Wang, Qian Wang, Kui Ren, and Wenjing Lou, `Privacy-preserving public
  auditing for data storage security in cloud computing', in {\em IEEE
  INFOCOM}, pp. 525--533, (2010).

\bibitem{WangWLRL09}
Qian Wang, Cong Wang, Jin Li, Kui Ren, and Wenjing Lou, `Enabling public
  verifiability and data dynamics for storage security in cloud computing', in
  {\em ESORICS}, pp. 355--370, (2009).

\bibitem{WangWRLL11}
Qian Wang, Cong Wang, Kui Ren, Wenjing Lou, and Jin Li, `Enabling public
  audibility and data dynamics for storage security in cloud computing', {\em
  IEEE Trans. Parallel Distrib. Syst.}, {\bf 22}(5),  847--859, (2011).

\end{thebibliography}
\end{document}